\newtheorem{thm}{thm}[section]
\newtheorem{corollary}[thm]{Corollary}
\theoremstyle{defi}
\newtheorem{defi}[thm]{defi}
\newcommand{\N}{\mathcal{N}}
\newcommand{\R}{\mathcal{R}}
\newcommand{\V}{\mathcal{V}}
\definecolor{pink}{rgb}{1,0,1}
\definecolor{garnet}{RGB}{210,15,30}
\title{Rank conditions on phylogenetic networks}
\author{Marta Casanellas, Jes\'us Fern\'andez-S\'anchez\footnote{Departament de Matem\`atiques, 
Universitat Polit{\`e}cnica de Catalunya, Av.~Diagonal 647, 08028 Barcelona. E-mail addresses: \texttt{marta.casanellas@upc.edu} and \texttt{jesus.fernandez.sanchez@upc.edu}}}
\begin{document}

\maketitle

\abstract{
Less rigid than phylogenetic trees, phylogenetic networks allow the description of a wider range of evolutionary events.  %
In this note, we explain how to extend the rank invariants from phylogenetic trees to phylogenetic networks evolving under the general Markov model and the equivariant models.}

\section{Introduction and preliminaries}

In order to model the evolution of a set of DNA sequences (each representing a species), one usually considers a phylogenetic tree (whose leaves are in correspondence with the living species and interior nodes correspond to ancestral species) and a Markov process governing the substitution of nucleotides on it. In phylogenetics, \emph{invariants} is the name given  to the polynomials that vanish on every distribution that arises as a Markov process on the phylogenetic tree. The main idea behind finding invariants is that they might help to distinguish phylogenetic trees and phylogenetic networks and they have been successfully used in phylogenetic reconstruction (see \cite{chifmankubatko2014,casfer2016}), in solving the identifiability of certain models \cite{allman2009} and in model selection \cite{kedzierska2012}.

Nevertheless, trees might be too restrictive to represent the evolutionary history as they cannot take into account processes such as hybridization or horizontal gene transfer. In order to incorporate them, one can use phylogenetic networks.  Invariants for phylogenetic networks have been found for the JC69 substitution model \cite{grosslong} (for networks with a single reticulation vertex) and for the 2-state symmetric model on networks with four leaves \cite{mitchellPhD,mitchell2018}.

We restrict to tree-child binary networks \cite[\S 10]{steelbook}. That is, throughout the paper a \emph{phylogenetic network} $\N$  is a rooted acyclic directed graph (with no edges in parallel) satisfying:
\begin{enumerate}
 \item the root $r$ has out-degree two,
 \item every vertex with out-degree zero has in-degree one and is called a \emph{leaf},
 \item all other vertices have either in-degree one and out-degree two (which are called \emph{tree vertices}) or in-degree two and out-degree one (which are called \emph{reticulation vertices})
 \item the child of a reticulation vertex is a tree vertex.
\end{enumerate}

Following \cite{grosslong} and \cite{nakhleh2011}, we introduce Markov processes on phylogenetic networks.
We denote by $\V$ the set of vertices of the network and will assume that there is a discrete random variable assigned to each vertex taking values in $\Sigma:=\{A,C,G,T\}$. We assign a distribution $\pi=(\pi_A,\pi_C,\pi_G,\pi_T)$ to the root $r$ and to each edge $e$, a 4$\times$4-transition matrix $M^e$. We write $\theta$ for the whole set of these parameters.

Let $\N$ be an $n$-leaf phylogenetic network and associate a $4 \times 4$ transition matrix from a nucleotide substitution model to each {directed} edge of $\N$. Suppose $\N$ has $m$ reticulation vertices $\R=\{w_1, \ldots, w_m\}$.
Each $w_i$ has indegree two, and we denote by $e^0_i$ and $e^1_i$ the two edges directed into $w_i$. Figure \ref{fig_network} shows an example of a phylogenetic network with 4 leaves and only one reticulation vertex $w_1$ (painted white).

Each binary vector $\sigma \in \{0, 1\}^m$ encodes the possible choices for the reticulation edges, where a 0 or a 1 in the $i$-th coordinate indicates that the edge $e^0_i$ or $e^1_i$ was deleted, respectively. Any $\sigma$ results in a $n$-leaf tree $T_{\sigma}$ rooted at $r$ with a collection of transition matrices corresponding to the particular edges in that tree.
We call $\theta_{\sigma}$ the restriction of the parameters $\theta$ of the network to $T_{\sigma}$.
%

\begin{figure}[h]
\begin{center} \label{fig_network}
 \includegraphics[scale=1]{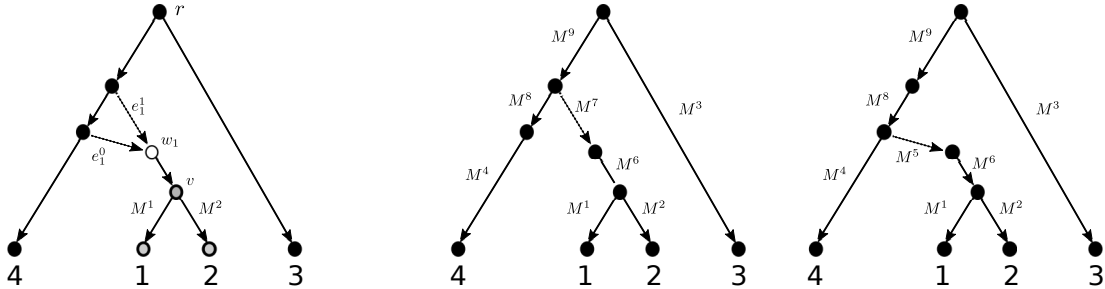}
\caption{On the left, a 4-leaf phylogenetic network $\N$ with one reticulation vertex $w_1$ painted white. The clade corresponding to leaves $A=\{1,2\}$ has been coloured with gray. On the right, the two trees obtained when removing the edges $e^0_1$ and $e^1_1$ incident on $w_1$.}
\end{center}
\end{figure}

For $1 \leq  i \leq m$,
denote by  $\delta_i$ the parameter corresponding to the probability that a particular site was inherited along edge $e^1_i$.
We can then define a distribution on the set $\Sigma^n$ (corresponding to characters at the leaves of the network) as follows
\begin{eqnarray*}
P_{\N,\theta}=\sum_{\sigma\in \{0,1\}^m} \left (\prod_{i=1}^m \delta_i^{1-\sigma_i}(1-\delta_i)^{\sigma_i}\right ) P_{T_{\sigma} ,\theta_{\sigma}}.
\end{eqnarray*}



\begin{defi}
Let $A|B$ be a bipartition of the set of leaves of $\N$.
Given a distribution vector $p$ on  $4^n$ states, the flattening of $p$ relative to the bipartition $A|B$ is the $4^{|A|}\times 4^{|B|}$ matrix $flatt_{A|B}(p)$ whose $(\textbf{i},\textbf{j})$-entry is given by $p(\textbf{k})$ where $\textbf{k}=(\textbf{i},\textbf{j})$ has entries matching those of $\textbf{i}$ and $\textbf{j}$ in the convenient order.
\end{defi}

Let $T$ be a tree and let $A|B$ be a bipartition of the leaves of $T$ induced by removing an edge $e$ of $T$. Let $w$ be the vertex of $e$ adjacent to $A$. If $p$ is a distribution on $T$ given by a distribution $\pi$ at $w$ and transition matrices at the edges of $T$ oriented out from $w$, then  $flatt_{A|B}(p)$ can be written as (\cite{AllmanRhodeschapter4},\cite{eriksson2005})
\begin{equation}\label{eq_flat}
flatt_{A|B}(p)=(M_A)^t D_{\pi}  M_B,
\end{equation}
where $D_{\pi}$  is the $4\times 4$ diagonal matrix with the entries of $\pi$ at the diagonal,  $M_A$ is the $4\times 4^{|A|}$ matrix whose entry $(x,\textbf{i})$  is the probability in the subtree $T_A$ of observing $\textbf{i}$ at the leaves $A$ given that the node $w$ is at state $x$   (and similarly for $M_B$).
In the next sections we extend the well known edge invariants to phylogenetic networks. On a separate work we will study the consequences that this may have in distinguishing phylogenetic networks and phylogenetic trees.

\section{Invariants for the general Markov model}

Assume that there is a clade $T_A$ in $\N$ that does not contain any reticulation vertex (this is illustrated in the network of Figure \ref{fig_network}, where the clade $T_A$  corresponds to leaves 1 and 2). Thus $T_A$ is a subtree of $\N$ shared by all $T_{\sigma}$ and the transition matrices at the edges of $T_A$ are also shared by all $T_{\sigma}$. We call $B$ the leaves in $\N$ that are not in~$A$.


\begin{thm}
If $p=P_{\N,\theta}$ is a distribution on a phylogenetic network $\N$ evolving under the GMM and $T_A$ is a tree-clade in $\N$, then $flatt_{A|B}(p)$ has rank $\leq 4$.
\end{thm}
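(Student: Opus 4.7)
The plan is to expand $p$ as the linear combination of tree distributions $P_{T_\sigma,\theta_\sigma}$ given by the definition, apply the flattening identity~(\ref{eq_flat}) on each tree separately, and then exploit the fact that the hypothesis ``$T_A$ contains no reticulation vertex'' forces a common left factor across all summands.

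To begin, I would note that the flattening map $p\mapsto flatt_{A|B}(p)$ is just a rearrangement of the coordinates of $p$ and is therefore linear, so
\[
flatt_{A|B}(p)=\sum_{\sigma\in\{0,1\}^m} c_\sigma\, flatt_{A|B}(P_{T_\sigma,\theta_\sigma}), \qquad c_\sigma=\prod_{i=1}^m \delta_i^{1-\sigma_i}(1-\delta_i)^{\sigma_i}.
\]
Let $w$ denote the vertex of $\N$ at which the clade $T_A$ is rooted. Since $T_A$ contains no reticulation, $w$ lies in every tree $T_\sigma$, and the edge of $T_\sigma$ immediately above $w$ induces the bipartition $A|B$ of the leaves. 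Re-rooting the Markov process on $T_\sigma$ at $w$ (this preserves the induced leaf distribution) and applying formula~(\ref{eq_flat}) to this single edge, I obtain
\[
flatt_{A|B}(P_{T_\sigma,\theta_\sigma})=M_A^t\, D_{\pi_\sigma}\, M_B^{(\sigma)},
\]
where $\pi_\sigma$ is the marginal distribution at $w$ in $T_\sigma$ and $M_B^{(\sigma)}$ encodes the conditional leaf probabilities on the $B$-side of $T_\sigma$. The crucial observation is that $M_A$ depends only on the transition matrices along the edges of the subtree $T_A$ (oriented away from $w$), and by hypothesis these matrices are shared by every $T_\sigma$; hence $M_A$ is independent of $\sigma$.

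Substituting back into the linear combination and pulling $M_A^t$ out of the sum gives
\[
flatt_{A|B}(p)=M_A^t\left(\sum_{\sigma\in\{0,1\}^m} c_\sigma\, D_{\pi_\sigma}\, M_B^{(\sigma)}\right),
\]
which exhibits the flattening as a product of a $4^{|A|}\times 4$ matrix and a $4\times 4^{|B|}$ matrix, whose rank is therefore at most~$4$. The only step requiring any care is verifying that the re-rooting at $w$ of the Markov process on each $T_\sigma$ yields exactly the same matrix $M_A$ across all $\sigma$; once this is in place, the conclusion is an immediate consequence of factoring $M_A^t$ out of the sum.
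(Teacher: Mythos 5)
Your proposal is correct and follows essentially the same route as the paper's proof: re-root each $T_\sigma$ at the root of the clade $T_A$, apply the factorization \eqref{eq_flat} tree by tree, observe that $M_A$ is common to all $\sigma$ because the clade contains no reticulation, and pull $M_A^t$ out of the weighted sum to exhibit $flatt_{A|B}(p)$ as a product of a $4^{|A|}\times 4$ and a $4\times 4^{|B|}$ matrix.
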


\begin{proof}
 Let $v$ be the root of $T_A$.
 To keep the proof simple we asume that $v$ is different from $r$.
 By rerooting each $T_{\sigma}$ at $v$, the edges of $\N$ that are not in $T_A$ might change their orientation, but the corresponding transition matrices can also be changed so that the joint distribution does not change. If $\mu_{\sigma}$ is the new set of parameters for $T_{\sigma}$, which is composed of the distribution $\pi^{\sigma}$ at the vertex $v$ and the new transition matrices, then $P_{T_{\sigma} ,\theta_{\sigma}} =P_{T_{\sigma} ,\mu_{\sigma}}$.
Note that after the rerooting process, the new transition matrices associated to the clade $T_A$ are still the same for all $T_{\sigma}$ (even if the distribution $\pi^{\sigma}$ at $v$ might be different for each $T_{\sigma}$). For each $T_{\sigma}$, we write $M_{A}$ for the transition matrix from $v$ to the leaves in $A$  and write $M_B^{\sigma}$ for the transition matrix from $v$ to the leaves in $B$ (as in equation \eqref{eq_flat}). Then, we have
\begin{eqnarray*}
flatt_{A|B}(p) &=& \sum_{\sigma}\left(\prod_{i=1}^m \delta_i^{1-\sigma_i}(1-\delta_i)^{\sigma_i}\right) flatt_{A|B}(P_{T_{\sigma} ,\mu_{\sigma}}) \\
& =&\sum_{\sigma}\left(\prod_{i=1}^m \delta_i^{1-\sigma_i}(1-\delta_i)^{\sigma_i}\right) M_A^tD_{\pi^\sigma}M_B^\sigma= \\ & = & M_A^t \sum_{\sigma}\left(\prod_{i=1}^m \delta_i^{1-\sigma_i}(1-\delta_i)^{\sigma_i}\right) D_{\pi^\sigma}M_B^\sigma,
 \end{eqnarray*}
where the second equality is obtained by using  \eqref{eq_flat} for each $T_{\sigma}$.
Therefore, $flatt_{A|B}(p)$ factorizes  as a product of a $4^{|A|}\times 4$ and a $4\times 4^{|B|}$ matrix, and hence has rank $\leq 4$.
\end{proof}

\begin{corollary}\
If $\N$ is a phylogenetic network with a tree-clade $T_A$ as above and $p$ is a distribution coming from a Markov process on $\N$, then the $5\times 5$ minors of $flatt_{A|B}(p)$ are invariants for $\N$.
\end{corollary}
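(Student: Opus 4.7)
The plan is to deduce the corollary directly from the theorem via the standard linear-algebra correspondence between rank and minors. Concretely, recall that for any matrix $M$ and any integer $k$, one has $\mathrm{rank}(M)\leq k$ if and only if every $(k+1)\times(k+1)$ minor of $M$ vanishes. Applying this with $k=4$ to $flatt_{A|B}(p)$, the theorem immediately yields that each $5\times 5$ minor of $flatt_{A|B}(p)$ is zero whenever $p=P_{\N,\theta}$ for some parameters $\theta$ on $\N$.

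To conclude, I would observe that by the definition of the flattening, each entry of $flatt_{A|B}(p)$ is a single coordinate $p(\mathbf{k})$ of the distribution $p$, so any $5\times 5$ minor of $flatt_{A|B}(p)$ is a polynomial in the $4^n$ coordinates of $p$. Since we have just seen that such a polynomial vanishes on every distribution arising from a Markov process on $\N$, it qualifies as an invariant for $\N$ in the sense recalled in the introduction.

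I do not anticipate a real obstacle here: all of the substantive work is contained in the theorem, and the corollary is essentially a rephrasing. The only thing to be mindful of is to make explicit that the $5\times 5$ minors are indeed polynomials in the entries of the distribution $p$ (and not, say, in the model parameters $\theta$), which is automatic from the way $flatt_{A|B}(p)$ is constructed from the coordinates of $p$.
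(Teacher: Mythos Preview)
Your proposal is correct and matches the paper's approach: the paper states the corollary without proof, treating it as an immediate consequence of the theorem via the standard rank--minor correspondence, exactly as you outline. Your additional remark that the minors are polynomials in the coordinates of $p$ (hence invariants in the sense of the introduction) is a welcome clarification that the paper leaves implicit.
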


Note that these invariants are shared by all the phylogenetic networks that have the same clade $T_A$. It is necessary to prove that the $5\times 5$ minors above do not vanish for other networks before using them with the idea of distinguishing networks.

\section{Invariants for equivariant models}

The construction of the first section stands for the general Markov model (GMM), where no particular structure is assumed for the transition matrices or the root distribution.
%
This construction can be adapted by taking the substitution model more restrictive and considering evolutionary submodels of the general Markov model.
%
%
A large class of these submodels are the $G$-equivariant models, where the transition matrices satisfy some symmetries according to a permutation group $G<\mathcal{S}_4$. With precision, equivariant models only consider transition matrices that remain invariant after permuting rows and columns according to the permutations of some given permutation group (see \cite{draisma2008} and \cite{casfer2010} for details). Among the $G$-equivariant models one finds the well known Jukes-Cantor model, Kimura 2 and 3 parameters and the strand symmetric model.

The result obtained in the previous section can be extended to $G$-equivariant models by using the tools introduced in \cite{casfer2010}. We explain briefly the idea.
{Let $\N$ be a network with a tree-clade $T_A$. If $p$ is a distribution on $\N$ arising from a $G$-equivariant model, then $p$ actually lies   in $(\mathbb{C}^{4^n})^G$}, the set of points that remain invariant under the action of $G$. If we write $N_i$ for the irreducible representations of $G$, the regular representation of $G$ induces
a decomposition of $W=\mathbb{C}^4$ into isotypic components: $W\cong \bigoplus_{i=1}^k N_i\otimes \mathbb{C}^{m_i}$, for some well-defined multiplicities $m_i\geq 0$, and similiar decompositions for every tensor power $W^{\otimes l}$, $l\geq 1$ (Maschke's theorem). 
If $|\cdot|$ stands for cardinality, we can rewrite $flatt_{A|B}(p)$ in a convenient basis of $(\mathbb{C}^{4^n})^G\cong \mathrm{Hom}_G(W^{\otimes |A|},W^{\otimes |B|})$ consistent with these decompositions, so that the resulting matrix becomes block diagonal: $$\overline{flatt}_{A|B}(p)=(B_1,\ldots,B_k).$$

%
In this setting, we are able to prove the following result:

\begin{thm}
If $p$ arises from the $G$-equivariant model on $\N$, then $\mathrm{rank}(B_i)\leq {m_i}$ for each $i=1,\ldots,k$.
\end{thm}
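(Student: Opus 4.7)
The plan is to mimic the factorization argument from the proof of Theorem 2.1, but now carry it out inside the category of $G$-representations so that the rank bound on the $4$-dimensional middle space gets refined block by block into bounds by the multiplicities $m_i$.

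First I would note that, in a $G$-equivariant model, every transition matrix is a $G$-equivariant endomorphism of $W=\mathbb{C}^4$ and the root distribution $\pi$ is $G$-invariant. Following the proof of Theorem 2.1, I reroot each $T_\sigma$ at the root $v$ of the tree-clade $T_A$; one checks (see \cite{casfer2010}) that the reoriented transition matrices remain $G$-equivariant and the induced distribution $\pi^\sigma$ at $v$ remains $G$-invariant. Consequently the map $M_A^t:W\to W^{\otimes |A|}$ (the transition from $v$ to the leaves in $A$) and the weighted combination
\[
C:=\sum_\sigma \left(\prod_{i=1}^m \delta_i^{1-\sigma_i}(1-\delta_i)^{\sigma_i}\right) D_{\pi^\sigma} M_B^\sigma \;:\; W^{\otimes |B|}\longrightarrow W
\]
are both $G$-equivariant. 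Viewed as a linear map $W^{\otimes |B|}\to W^{\otimes |A|}$, the identity $flatt_{A|B}(p)=M_A^t\, C$ from Theorem 2.1 therefore exhibits the flattening as a $G$-equivariant map that factors through the $4$-dimensional representation $W$.

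Next I would decompose all three spaces into isotypic components using Maschke's theorem:
\[
W\cong \bigoplus_{i=1}^k N_i\otimes \mathbb{C}^{m_i},\qquad W^{\otimes |A|}\cong \bigoplus_{i=1}^k N_i\otimes \mathbb{C}^{a_i},\qquad W^{\otimes |B|}\cong \bigoplus_{i=1}^k N_i\otimes \mathbb{C}^{b_i}.
\]
By Schur's lemma, $\mathrm{Hom}_G(W^{\otimes |B|},W^{\otimes |A|})\cong \bigoplus_i \mathrm{Hom}(\mathbb{C}^{b_i},\mathbb{C}^{a_i})$, and in a basis adapted to these decompositions this is precisely the block-diagonal form $\overline{flatt}_{A|B}(p)=(B_1,\ldots,B_k)$ described in the preamble. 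Under the same correspondence, $M_A^t$ decomposes as a direct sum of linear maps $\alpha_i:\mathbb{C}^{m_i}\to\mathbb{C}^{a_i}$ and $C$ as a direct sum $\gamma_i:\mathbb{C}^{b_i}\to\mathbb{C}^{m_i}$. Because the blockwise decomposition is functorial (the $i$-th block of a composition of equivariant maps equals the composition of the $i$-th blocks), one obtains $B_i=\alpha_i\circ\gamma_i$. Since this composition factors through $\mathbb{C}^{m_i}$, the bound $\mathrm{rank}(B_i)\le m_i$ follows.

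The step I expect to require the most care is the first one: verifying that the rerooting used in Theorem 2.1 can be performed within the $G$-equivariant model, i.e.\ that the reoriented transition matrices and the new root distribution are still equivariant and invariant respectively. Once this is in place, the rest of the argument is essentially forced by Schur's lemma, and the apparatus for writing $\overline{flatt}_{A|B}$ in a basis adapted to the isotypic decomposition has already been developed in \cite{casfer2010}, so the rank bound is essentially a restatement of the fact that an equivariant map factoring through $W$ can use at most $m_i$ copies of $N_i$.
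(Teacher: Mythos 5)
Your proposal is correct and follows exactly the strategy the paper sketches (the paper itself defers the detailed proof to a forthcoming work): reroot at the root of the tree-clade, observe that the factorization $flatt_{A|B}(p)=M_A^t\,C$ through $W$ is a factorization of $G$-equivariant maps, and apply Schur's lemma so that each multiplicity-space block factors through $\mathbb{C}^{m_i}$. Your flagged point about rerooting is the right one to check, and it goes through because the marginal distribution at any internal vertex is $G$-invariant (being the image of a $G$-invariant root distribution under equivariant matrices), so the reversed transition matrices remain equivariant, with the degenerate (non-invertible $D_{\pi_v}$) case handled by Zariski closure since the rank conditions are closed.
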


\begin{corollary}\
If $\N$ is a phylogenetic network with a tree-clade $T_A$ as above and $p$ is a distribution coming from a Markov process on $\N$, then the $({m_i}+1) \times ({m_i}+1)$-minors of the block $B_i$ of $flatt_{A|B}(p)$ are invariants for $\N$.
\end{corollary}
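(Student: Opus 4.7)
The plan is to derive this corollary directly from the preceding rank theorem; the argument mirrors the one used in the corollary of Section~2.

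First, I would invoke the theorem to conclude that for every distribution $p=P_{\N,\theta}$ arising from the $G$-equivariant model on $\N$, each block $B_i$ of $\overline{flatt}_{A|B}(p)$ satisfies $\mathrm{rank}(B_i)\leq m_i$. By the standard characterisation of rank in terms of minors, this is equivalent to the simultaneous vanishing of all $(m_i+1)\times(m_i+1)$ minors of $B_i$ for every such $p$.

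The remaining point is to check that these minors are bona fide polynomials in the entries of the original distribution $p$. By construction, $\overline{flatt}_{A|B}(p)$ is obtained from $flatt_{A|B}(p)$ by a fixed change of basis on $W^{\otimes |A|}$ and on $W^{\otimes |B|}$ coming from the isotypic decompositions guaranteed by Maschke's theorem. This change of basis depends only on $G$ and on the ambient tensor powers, not on $p$, so the entries of each $B_i$ are linear forms in the entries of $flatt_{A|B}(p)$, and hence linear forms in the entries of $p$ itself. Consequently every $(m_i+1)\times(m_i+1)$ minor of $B_i$ pulls back to a polynomial in $p$.

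Combining these two observations, the $(m_i+1)\times(m_i+1)$ minors of $B_i$ are polynomials in $p$ that vanish on every distribution coming from a $G$-equivariant Markov process on $\N$, which is precisely the definition of an invariant for $\N$. There is no genuine obstacle beyond the theorem itself; the only subtlety worth flagging is the linearity and $p$-independence of the change of basis that realises the block-diagonal form, since it is exactly this property that turns a rank bound into honest polynomial invariants rather than merely basis-dependent or rational expressions.
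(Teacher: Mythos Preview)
Your proposal is correct and matches the paper's treatment: the paper states this corollary without proof, as an immediate consequence of the preceding rank theorem together with the standard fact that a rank bound is equivalent to the vanishing of the corresponding minors. Your additional remark that the change of basis is fixed and linear in $p$, so that the minors are genuine polynomials in $p$, is the natural (and indeed the only) point one might wish to make explicit, and it is entirely in line with the paper's approach.
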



The precise technical statement and the proof will be provided in a forthcoming paper. It will be interesting to check whether these invariants arising from rank conditions coincide with some of the invariants found in \cite{grosslong} for the Jukes-Cantor model.

\subsection*{Acknowledgements} Both authors are partially funded by AGAUR Project 2017 SGR-932 and MINECO/
FEDER Projects MTM2015-69135 and MDM-2014-0445.
%







\end{document}